\documentclass{svjour2}

\usepackage{amssymb}
\usepackage{comment}
\usepackage{amsfonts}
\usepackage{amsmath}
\usepackage{graphicx}

\def\sE{\mathcal{E}}
\def\sH{\mathcal{H}}
\def\E{\mathbb{E}}
\def\P{\mathbb{P}}

% L^p spaces
%%%%%%%%%%%%%
 %l-pees

% Symbols
%%%%%%%%%%

\newcommand{\N}{{\mathbb N}}
\newcommand{\PP}{{\mathbb P}}

\newcommand{\EE}{{\mathbb E}}
\newcommand{\FF}{{\mathcal F}}
\newcommand{\FFF}{{\mathbb F}}
\newcommand{\GGG}{{\mathbb G}}

\newcommand{\EN}{{\mathcal E}}

%\newcommand{\bm}{\boldsymbol{\mu}}

% Operators
%%%%%%%%%%%%

%\newcommand{\esssup}{\operatorname{esssup}}

%\newcommand{\essinf}{\operatorname{essinf}}

% Abbreviations
%%%%%%%%%%%%%%%%
 % scalar product
 % absolute value

 % curly brackets
 % a set with "such that"
 % indicator of a curly set
 % indicator of a set
 % a sequence indexed by n
  % a sequence
 % a net indexed by \alpha
 % cadlag phrase
 % Stochastic interval
 % Semiopen Stochastic interval

 % process form - puts parenthesis
\newcommand{\prfo}[1]{ ( #1 )_{t\in [0,1]}} % process form - puts parenthesis
 % process form - puts parenthesis

 % from #1 to #2 (with ...)

% Arrows, etc.
%%%%%%%%%%%%%%%

\journalname{Annals of Finance}

\begin{document}

\title{On the semimartingale property via bounded
logarithmic utility
}
%\subtitle{Do you have a subtitle?\\ If so, write it here}

%\titlerunning{Short form of title}        % if too long for running head

\author{Kasper Larsen         \and
        Gordan \v Zitkovi\' c
}

%\authorrunning{Short form of author list} % if too long for running head

\institute{Kasper Larsen \at
              Department of Mathematical Sciences, Carnegie Mellon
University, Pittsburgh, PA 15213, USA\\
              \email{kasperl@andrew.cmu.edu}           %  \\
%             \emph{Present address:} of F. Author  %  if needed
           \and
           Gordan \v Zitkovi\' c \at
              Department of Mathematics, University of Texas at
Austin, Austin, TX 78712, USA \\\email{gordanz@math.utexas.edu} }

\date{Received: date / Accepted: date}
% The correct dates will be entered by the editor

\maketitle

\begin{abstract}
This paper provides a new version of the condition of Di Nunno et
al. (2003), Ankirchner and Imkeller (2005) and Biagini and \O
ksendal (2005) ensuring the semimartingale property for a large
class of continuous stochastic processes. Unlike our predecessors,
we base our modeling framework on the concept of portfolio
proportions which yields a short self-contained proof of the main
theorem, as well as a counterexample, showing that analogues of
our results do not hold
in the discontinuous setting.\ \\

\keywords{Arbitrage \and Enlargement of filtrations \and Financial
markets \and Logarithmic utility \and Semimartingales \and
Stochastic processes \and Utility maximization}\ \\

\noindent{\bf JEL Classification Numbers} C61 $\cdot$ G11
\end{abstract}

\section{Introduction and summary}

In \cite{DelSch94} the connection between the concept of
no-arbitrage and the assumption that the financial assets are driven
by semimartingales is initiated. Here it is shown there that if the
financial market satisfies the condition
 of \emph{no free lunch with vanishing risk} for simple
trading strategies, then the traded
securities allow for a semimartingale decomposition.
This and similar results depend heavily on the mathematical constructs
- the theory of stochastic integration employed, and
the class of the integrands
used - to describe the economic concept of no-arbitrage.
\cite {BjoHul05} illustrates possible pitfalls resulting from
an attempt of economic interpretation of mathematical results
based on an integration theory at odds with the financial intuition.

The main result of the present paper is inspired by
\cite{BiaOks05} and \cite{DinMeyOksPro03}, and states (loosely)
that a continuous process with finite quadratic variation is a
semimartingale if the expected utility of a logarithmic investor
is uniformly bounded from above over a specific natural class of
trading strategies. Unlike \cite{BiaOks05} and
\cite{DinMeyOksPro03}, we do not replace the It\^{o} integration
with the anticipative {\em forward} integration, and we do not
assume the existence of a trading strategy that achieves the
optimal expected logarithmic utility. In fact, the existence of
such a strategy is one of the conclusions of our main theorem.

The recent and independent paper \cite{AI05} develops an idea
similar to ours and relates the semimartingality of the
stock-price process to the boundedness of the expected utility.
The authors base their approach on simple buy-and-hold strategies
thereby circumventing the lack of a stochastic integration theory.
Our approach is different - it hinges on the observation that a
canonical integration theory can be based on simple portfolio {\em
proportions}, without falling into the traps described in
\cite{BjoHul05}. Indeed, while one of the main results of
\cite{AI05} is that bounded utility implies semimartingality,
regardless of the continuity properties of the process under
scrutiny, our epilogue is different. We construct an example of a
discontinuous non-semimartingale $S$ with the property that the
expected logarithmic utility is uniformly bounded over all
strategies in which portfolio proportions are simple processes.
Moreover, there exists a shrinkage of the original filtration
under which the process $S$ is a semimartingale. The existence of
such a counterexample poses the following question: {\em Can we
describe (and work with) a class of stochastic processes, strictly
larger than the class of semimartingales, for which the
logarithmic investors will not be able to achieve arbitrarily
large expected utilities?} While the non-semimartingales in this
class will surely admit free lunch with vanishing risk, the
possibiliy of their use in financial modeling is not ruled out.
Indeed, the logarithmic investors will not demand unlimited
quantities of such securities. We leave this question for further
research.

In the continuous case, the flavor of our results agrees
with \cite{AI05}, but our
approach provides new insights in several respects. First, the proof of
our main theorem is short and self-contained, and uses a simple
Hilbert-space argument. As a consequence of this, we are able
to explicitly derive the semimartingale decomposition of the stock
price in terms of the Riesz representation of a suitably defined linear
functional. The proof of the related result in \cite{AI05} is based
on the already mentioned result of \cite{DelSch94}, and provides only the
abstract existence of the semimartingale decomposition.
Second, a byproduct of our analysis is the
existence of the optimal trading strategy for an investor with logarithmic
utility - the growth-optimal portfolio.

The paper is structured as follows: In the first section we describe
the framework and prove our main result. The second section provides
a counterexample which illustrates the fact that, when jumps are
present,  bounded logarithmic utility on
 simple portfolio proportions is not sufficient to grant the
semimartingality of the price process.

As all our stochastic processes are defined on the time
horizon $[0,1]$, we will consistently use the shorthand $S$ for the
process $\prfo{S_t}$, throughout the paper.

\section{The Main Result for Continuous Processes}

\subsection{The Modeling Framework}
\label{subs:modfra}
We consider a continuous stochastic process $S$,
defined on the unit time horizon $\left[0,1\right]$, and adapted to a
complete and right-continuous filtration
$\FFF\triangleq \prfo{\FF_t}$,
 on some probability space
$(\Omega,\FF,\PP)$. We assume that $S$ has finite quadratic
variation on $\left[0,1\right]$,  meaning for $\omega \in \Omega$
the following limit exists
\begin{equation}
\label{equ:qv}
\lim_{n\to\infty} \sum_{k=1}^{n}
\left(S_{\frac{k}{n}}-S_{\frac{k-1}{n}}\right)^2
\end{equation}
and is finite. In that case, the process $S$ defined by
\begin{align}\label{qv}
\left[S\right]_t\triangleq \lim_{n\to\infty} \sum_{k=1}^{n}
        \left(S_{t\wedge \frac{k}{n}}-S_{t\wedge\frac{k-1}{n}}\right)^2
\end{align}
is finite valued, non-decreasing and continuous. Of course, the
sequence $\{0,\frac{1}{n},$ $\dots,$ $\frac{n-1}{n},1\}$ of partitions in
\eqref{qv} is chosen for simplicity; any other sequence with
comparable properties would lead to the same conclusions.

\begin{remark}
Arguably the most natural way to ensure the existence of the
quadratic variation, as defined in \eqref{equ:qv}, is to assume
the existence of a filtration $\mathbb{F}'\triangleq
\prfo{\mathcal{F}'_t}$ smaller than $\FFF$
(i.e.,~$\mathcal{F}'_t\subseteq \FF_t$, $t\in \left[0,1\right]$)
such that $S$ is an $\mathbb{F}'$-se\-mi\-mar\-tin\-ga\-le.
Indeed, semimartingales have finite quadratic variation, which,
being defined in a pathwise manner, remains undisturbed under
enlargements of the filtration. Inside-trading models typically
assume the existence of two classes of investors - {\em regular
investors}, with access to the public information $\mathbb{F}'$,
and {\em insiders}, whose information set is modeled by $\FFF$,
see the seminal paper \cite{KarPik96}. In this setting, $S$
represents the stock-price (or the return) process described by a
semimartingale from the regular investor's point. Our main result,
below, can be applied in this setting as a sufficient condition on
the insider's (superior) information structure $\FFF$, so that $S$
remains a semimartingale under $\FFF$ as well.
\end{remark}

\subsection{Some Classes of Stochastic Processes}
Let ${\mathcal{H}}^{s}$ denote the set of all stochastic processes
$\pi$ of the form:
\begin{equation}
\pi_{t}=\sum_{i=1}^{n}K_{i}{\mathbf{1}}_{{(T_{i-1},T_{i}]}}%
(t),\label{equ:simple}%
\end{equation}
where $n\in{\mathbb{N}}$, $0\triangleq T_{0}\leq T_{1}\leq\dots\leq
T_{n}\triangleq 1$ are $\FFF$-stopping times and
$K_{i}\in{\mathbb{L}}^{\infty
}(\mathcal{\mathcal{F}}_{T_{i-1}}),~i=1,\ldots,n$.
\begin{remark}
An analogous class of processes with the filtration $\FFF$ replaced by
a different filtration $\GGG$ will be needed in Section 3. Such a
class will be denoted by $\sH^s(\GGG)$.
\end{remark}
It is well known that
\begin{equation}
{\mathcal{H}}^{2}\triangleq\left\{  \pi:\pi\text{ is
predictable and }\,{||\pi||}_{{\mathcal{H}}^{2}}%
<\infty\right\}  ,\nonumber
\end{equation}
is a Hilbert space where ${||\pi||}_{{\mathcal{H}}^{2}}^2
\triangleq{\mathbb{E}}\int_{0}%
^{1}\pi_{u}^{2}\,d\left[S\right]_{u}$. As no integrability
assumptions will be placed on either $S$ or $\left[S\right]$, the
stopping-time sequence $\{T_n\}_{n=1}^\infty$, where
\begin{equation}\label{stoppingtimes}
        T_n\triangleq\inf\{t\leq 1: \vert S_t\vert >n\}\land \inf\{t\leq
        1:\left[S\right]_t>n\}\land 1,
\end{equation}
will prove useful in the reduction arguments in the sequel. Indeed,
$T_n\leq T_{n+1}\leq 1$ for all $n$, and $\P(T_n=1)\to 1$ for
$n\to\infty$. Finally, we define $\sH^{b}
\triangleq\cup_{n\in\mathbb{N}} \sH^{b}_n$, where
\begin{align*}
        \sH^{b}_n \triangleq\{ \pi\in\sH^2\,:
        \,\pi_t=0 \text{, for $t>T_n$} \}, \ n\in\N.
\end{align*}
We are now ready to state and prove  the following auxiliary  result.
\begin{proposition}\label{density}
$\sH^{b}\cap\sH^{s}$ is dense in $\sH^2$ with respect to the
norm  $||\cdot||_{\sH^2}$.
\end{proposition}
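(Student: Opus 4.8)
The plan is to work directly in the Hilbert space $\sH^2$ and show that the orthogonal complement of $\sH^b\cap\sH^s$ is trivial. The useful reformulation is that $\sH^2$ is exactly $\ltwo(\mathcal P,\mu)$, where $\mathcal P$ is the predictable $\sigma$-field on $[0,1]\times\Omega$ and $\mu(du,d\omega)\triangleq d[S]_u(\omega)\,\P(d\omega)$, since $\norm{\pi}_{\sH^2}^2=\int\pi^2\,d\mu$. The single structural input I would exploit is that, by the definition \eqref{stoppingtimes}, $[S]_{T_n}\le n$, so the restriction of $\mu$ to $\sint{0,T_n}$ is a \emph{finite} measure (of total mass $\E[[S]_{T_n}]\le n$). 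This finiteness is what makes localization at $T_n$ the natural first step.

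First I would localize. For $\pi\in\sH^2$ set $\pi^{(n)}\triangleq\pi\,\ind{\sint{0,T_n}}\in\sH^b_n$. Since $T_n\le T_{n+1}\le1$ and the increasing events $\set{T_n=1}$ have probability tending to $1$, we have $T_n\to1$ almost surely; as $[S]$ charges no atom at $t=1$, dominated convergence gives $\norm{\pi-\pi^{(n)}}_{\sH^2}^2=\E\int_0^1\pi_u^2\,\inds{u>T_n}\,d[S]_u\to0$. Thus $\cup_n\sH^b_n$ is dense in $\sH^2$, and it suffices to prove that $\sH^b\cap\sH^s$ is dense in each closed subspace $\sH^b_n$.

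Next I would produce the building blocks. For deterministic $0\le s<u\le1$ and $A\in\FF_s$ the rectangle $\ind A\ind{(s,u]}$ lies in $\sH^s$ (take $K=\ind A\in\linf(\FF_s)$ on $(s,u]$ and $K=0$ elsewhere), and I claim its stopped version $g_{s,u,A}\triangleq\ind A\ind{(s,u]}\ind{\sint{0,T_n}}$ lies in $\sH^b\cap\sH^s$. This verification is the only delicate point: after stopping, the left endpoint of the active interval becomes $s\wedge T_n$, so the coefficient must be $\FF_{s\wedge T_n}$-measurable. This holds because, for $A\in\FF_s$, one has $A\cap\set{s\le T_n}\in\FF_{s\wedge T_n}$ — a standard property of the $\sigma$-fields attached to stopping times — so that $\ind A\,\inds{s\le T_n}\in\linf(\FF_{s\wedge T_n})$, while $g_{s,u,A}=\ind A\,\inds{s\le T_n}\,\ind{(s\wedge T_n,\,u\wedge T_n]}$. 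The family $\set{g_{s,u,A}}$ is multiplicative (a product of two is again of this form, using $A\cap A'\in\FF_{s\vee s'}$), and the associated sets generate $\mathcal P$ restricted to $\sint{0,T_n}$, up to the $\mu$-null slice $\set{0}\times\Omega$.

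Finally I would run the orthogonality argument. If $\phi\in\sH^b_n$ satisfies $\scl{\phi}{g_{s,u,A}}_{\sH^2}=0$ for all such blocks, then, $\mu$ being finite on $\sint{0,T_n}$ so that $\phi\in\lone$, the set function $\nu(C)\triangleq\int_C\phi\,d\mu$ is a finite signed measure on the predictable subsets of $\sint{0,T_n}$ that vanishes on the multiplicative generating class above (which contains the whole space). A Dynkin/monotone-class argument then forces $\nu\equiv0$, hence $\phi=0$ $\mu$-a.e., i.e.\ $\phi=0$ in $\sH^2$. Since finite linear combinations of the $g_{s,u,A}$ are again simple processes supported on $\sint{0,T_n}$ (their time-partitions being merely common deterministic refinements, afterwards stopped at $T_n$ as in the previous step), their closed span lies in $\overline{\sH^b\cap\sH^s}$ and, by the above, equals all of $\sH^b_n$; together with the localization step this proves the proposition. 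The conceptual heart — density of simple functions in $\ltwo$ of a finite measure space — is routine; the step I expect to be the real obstacle, and would write out carefully, is the measurability bookkeeping showing that truncating a predictable rectangle at $T_n$ keeps it inside the rigidly prescribed class $\sH^s$, i.e.\ that the truncated coefficient remains measurable with respect to the $\sigma$-field attached to the new, random left endpoint.
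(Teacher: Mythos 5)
Your argument is correct, and its first half is exactly the paper's: truncate $\pi$ at $T_n$, use dominated convergence on the shrinking events $\{T_n<1\}$ to see that $\sH^b$ is dense in $\sH^2$, and thereby reduce everything to the density of $\sH^b\cap\sH^s$ in each $\sH^b_n$. Where you genuinely diverge is in that second step. The paper disposes of it in one line, by setting $A_t\triangleq[S]^{T_n}_t$ and citing Lemma~2.7, p.~135 of Karatzas--Shreve, i.e.\ a ready-made textbook approximation of square-integrable integrands by simple processes. You instead prove it from scratch: you identify $\sH^b_n$ with $\ltwo$ of the \emph{finite} measure $d[S]\otimes d\P$ restricted to $\sint{0,T_n}$, verify that the truncated predictable rectangles $\ind{A}\ind{(s,u]}\ind{\sint{0,T_n}}$ belong to $\sH^b\cap\sH^s$ --- correctly isolating the one delicate point, namely that $\ind{A}\inds{s\le T_n}\in\linf(\FF_{s\wedge T_n})$ for $A\in\FF_s$, which is the standard fact $A\cap\{s\le T_n\}\in\FF_{s\wedge T_n}$ --- and then annihilate the orthogonal complement inside $\sH^b_n$ by a $\pi$--$\lambda$ argument over the multiplicative class of rectangles generating the predictable $\sigma$-field (the slice $\{0\}\times\Omega$ being $\mu$-null since $[S]_0=0$). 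Your route is longer but self-contained and makes explicit precisely which structural inputs are needed (finiteness of the localized Dol\'eans measure, generation of the predictable $\sigma$-field by rectangles, and stability of the class $\sH^s$ under stopping at $T_n$, including the linearity of $\sH^b\cap\sH^s$ needed to pass from the individual rectangles to their closed span); the paper's route buys brevity at the cost of leaving to the reader the check that the cited lemma applies to the measure induced by $[S]^{T_n}$ in the present predictable setting. Both proofs are valid.
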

\begin{proof}
Pick a $\pi\in\sH^2$, and note that
\begin{equation}
    \nonumber
    %\label{}
    \begin{split}
        ||\pi-\pi{\bf 1}_{\left[0,T_n\right]}||^2_{\sH^2}=\E\left[\int_{T_n}^1 \pi_u^2\,
        d\left[S\right]_u\right]\to 0,\text{ as $n\to\infty$}
    \end{split}
\end{equation}
by the Dominated Convergence Theorem. Hence, the family $\sH^{b}$
is dense in $ \sH^2$. To finish the proof of the proposition, it
is enough  to show that $\sH^{b}\cap\sH^{s}$ is dense in
$\sH^{b}$. For this, in turn, it suffices to prove that
$\sH^{b}_n\cap\sH^{s}$ is dense in $\sH^{b}_n$, for each $n$. To
verify this statement, we define the process $A_t \triangleq
[S]^{T_n}_t$ and apply Lemma 2.7, p. 135 in \cite{KarShr91}.
$\lozenge$
\end{proof}
\endproof

\subsection{The Canonical Definition of Stochastic Exponentials}

\label{sub:canonical-definition} Although no It\^o-type
integration theory exists for general adapted integrands with
respect to a process $S$ merely satisfying the assumptions of
Subsection \ref{subs:modfra}, we can always define the stochastic
integral for an integrand $\pi\in\sH^{s}$, of the form
\eqref{equ:simple}, in the familiar way
\begin{equation}
(\pi\cdot S)_{t}\triangleq\sum_{i=1}^{n}
K_{i}\left(S_{T_{i}\wedge t}-S_{T_{i-1}\wedge t}\right).
\label{equ:Brownian-integral}%
\end{equation}
More importantly for our results, stochastic
exponentials can be defined canonically as well by
\begin{equation}
\sE(\pi\cdot S)_{t}\triangleq\exp\left(\int_{0}^{t}\pi_{u}\,dS_{u}%
-\frac{1}{2}\int_{0}^{t}\pi_{u}^{2}\,d\left[S\right]_{u}%
\right)\label{equ:exponential-for-Brown}%
\end{equation}
for all $\pi\in\sH^{s}$ with the $dS$-integral inside the
exponential function defined by (\ref{equ:Brownian-integral}). For
$\pi\in \sH^{s}$, we can show that ${\mathcal{E}}(\pi\cdot S)$ is
the unique pathwise solution $Z$ to the Dol\'{e}ans-Dade stochastic
differential equation
\begin{align}\label{eq:dd}
dZ_t = Z_t \pi_t dS_t,\quad Z_0=1.
\end{align}
Of course, the integrand $Z\pi$ appearing in (\ref{eq:dd}) is
not necessarily in ${\mathcal{H}}^{s}$. Nevertheless, the
integral $\int_0^tZ_{u}\pi_{u}\,dS_{u}$ exists a.s.~as a limit of
Riemann sums and equals $Z_{t}-Z_{0}$. In order to see this,
for a given $\pi\in\sH^{s}$ and a fixed $\omega \in
\Omega$, we define the continuous function $x$ on $\left[0,1\right]$ by
$$
x(t) \triangleq \int_0^t \pi_u dS_u - \frac12\int_0^t\pi^2_ud\left[S\right]_u.
$$
The quadratic variation of this deterministic function is given by
$
\left[x\right](t) = \int_0^t \pi_u^2d\left[S\right]_u.
$
The statement now follows from combining these expressions with
Exercise 3.13 p. 153 in \cite{RY99} applied to the function $F(x)
\triangleq \exp(x)$.

\subsection{A Financial Interpretation}
\label{subs:fin-int} We consider a simple financial market
consisting of two assets: one risk-free asset with a zero interest
rate, and one risky asset whose price at time $t$ will be denoted by
$P_t$ and is given by $P_t\triangleq \EN(S)_t$. For any simple
process $\pi\in\sH^{s}$ of the form \eqref{equ:simple}, the
following equation will be used as a definition of the wealth
process of a financial agent investing in the market
\begin{align}\label{wealthdef}
W^\pi_0 \triangleq 1,\quad W^{\pi}_t \triangleq \sE(\pi\cdot
S)_t\;\;\text{for}\;\;t>0.
\end{align}
We will interpret the value $\pi_t$ as the proportion of his/her
current wealth, the  investor has invested in the risky asset at
time $t$. In order to motivate this terminology, let us assume for a
second that the return process $S$ is a semimartingale, and hence,
$P_t \triangleq \sE(S)_t$ in the classical sense. With the process
$H$ denoting the number of shares of the risky asset in the
investor's self financing portfolio,
 the investor's wealth evolves according to the following equation
$$
dW_t = H_t\, dP_t = H_tP_t\, dS_t =
\left(\frac{H_tP_t}{W_t}\right)W_t\, dS_t.
$$
This translates exactly into $W_t = \sE(\pi\cdot S)_t$, for $\pi$ given
by
$$
\pi_t \triangleq \frac{H_tP_t}{W_t}.
$$
Of course, the process $S$ in our framework is not assumed to be a
semimartingale, so the above discussion cannot be transferred to our
setting directly. However, the discussion in Subsection
\ref{sub:canonical-definition} implies that such a transformation is
indeed feasible, as long as we use only simple processes
$\pi\in\sH^{s}$.
\begin{remark}
A somewhat different interpretation of the equation \eqref{wealthdef}
stems from the alternative assumption that $S$ itself (and not $P=\EN(S)$) is the
price process of the risky asset. In that case, the equation
\eqref{wealthdef} still describes the evolution of the investor's
wealth, but now under that understanding that $\pi_t$ denotes the
number of shares of the risky asset held {\em per unit of wealth} - a
concept less common than that of portfolio proportions described
above.
\end{remark}
Finally, we impose the following assumption on the risk-aversion
characteristics of our
financial agent: his/her goal is to invest in such a way as to
maximize the expected logarithmic utility of the terminal wealth. The
agents with this objective are commonly called log-investors.

\subsection{The Main Result}
In financial terms, the premise of our main result is that the
price process of the risky asset is such that the expected utility
of the $\log$-investor is uniformly bounded over all simple
portfolio-proportion processes $\pi\in\sH^{s}$, i.e.,
\begin{equation}
\sup_{\pi\in{\mathcal{H}}^{s}}{\mathbb{E}}\left[\log(W_{1}^{\pi
})\right]<\infty.\label{equ:finite-log-Brown}
\end{equation}
In this expression, we implicitly use the convention that if
$\pi\in\sH^{s}$ renders both the positive and the negative part of
$\log(W^\pi_1)$ non-integrable, we define
$\E\left[\log(W^\pi_1)\right]\triangleq -\infty$. This convention
is widely used in the theory of utility maximization, see e.g.,
\cite{AI05} p. 482.

Our main result in the continuous setting is the following.
\begin{theorem}
\label{thm:main-Brownian} Let $S$ be a
continuous adapted stochastic process with finite quadratic
variation in the sense of \eqref{equ:qv}, satisfying the condition
\eqref{equ:finite-log-Brown}. Then $S$ is a semimartingale
with decomposition
\begin{align}\label{decomposition}
S_{t}=\hat{S}_{t}+\int_{0}^{t}\alpha_{u}\,d\left[S\right]_{u},
\end{align}
where $\hat{S}$ is a local martingale and $\alpha$
is a predictable process in $\sH^2$.
\end{theorem}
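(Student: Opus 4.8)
The plan is to encode the hypothesis \eqref{equ:finite-log-Brown} into the boundedness of a single linear functional on the Hilbert space $\sH^2$, and then to read off the decomposition \eqref{decomposition} from its Riesz representative. For $\pi\in\sH^b\cap\sH^s$ the stochastic integral $\int_0^1\pi_u\,dS_u$ is a bounded random variable: if $\pi$ vanishes after $T_m$ it only sees $S$ on $[0,T_m]$, where $|S|\le m$. Hence $L(\pi):=\E[\int_0^1\pi_u\,dS_u]$ defines a linear functional on the vector space $\sH^b\cap\sH^s$, and since $\pi\in\sH^2$ renders $-\tfrac12\int_0^1\pi_u^2\,d[S]_u$ integrable, $\E[\log W_1^\pi]=L(\pi)-\tfrac12\norm{\pi}_{\sH^2}^2$ is finite (the $-\infty$ convention never enters on this class). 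The crucial move is to feed the rescaled strategies $\lambda\pi\in\sH^s$, $\lambda\in\R$, into the hypothesis: writing $C:=\sup_{\pi\in\sH^s}\E[\log W_1^\pi]<\infty$ (note $C\ge0$, since $\pi\equiv0$ gives $\log W_1^0=0$), the bound $\lambda L(\pi)-\tfrac12\lambda^2\norm{\pi}_{\sH^2}^2\le C$ for all $\lambda$ yields, upon optimizing in $\lambda$, the inequality $L(\pi)^2\le 2C\norm{\pi}_{\sH^2}^2$ for every $\pi\in\sH^b\cap\sH^s$. Thus $L$ is bounded with norm at most $\sqrt{2C}$.

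Because $\sH^b\cap\sH^s$ is dense in $\sH^2$ by Proposition \ref{density}, $L$ extends uniquely to a bounded linear functional on all of $\sH^2$. The Riesz representation theorem then supplies a unique $\alpha\in\sH^2$ with $L(\pi)=\scl{\alpha}{\pi}_{\sH^2}=\E[\int_0^1\alpha_u\pi_u\,d[S]_u]$ for every $\pi\in\sH^2$; this $\alpha$ will be exactly the predictable integrand in \eqref{decomposition}. I then set $\hat S_t:=S_t-\int_0^t\alpha_u\,d[S]_u$, the finite-variation part being well defined since $\int_0^1|\alpha_u|\,d[S]_u\le(\int_0^1\alpha_u^2\,d[S]_u)^{1/2}\,[S]_1^{1/2}<\infty$ almost surely.

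It remains to show that $\hat S$ is a local martingale, and $\{T_n\}$ is the natural localizing sequence. For fixed $n$, times $s<t$, and $A\in\FF_s$, I would insert the simple strategy $\pi:=\ind{A\cap\{s<T_n\}}\,\ind{(s\wedge T_n,\,t\wedge T_n]}$ into the identity $L(\pi)=\scl{\alpha}{\pi}_{\sH^2}$. This $\pi$ lies in $\sH^b_n\cap\sH^s$ because $A\cap\{s<T_n\}\in\FF_{s\wedge T_n}$ (a standard measurability fact). Since $\int_0^1\pi_u\,dS_u=\ind A(S_{t\wedge T_n}-S_{s\wedge T_n})$ and $\int_0^1\alpha_u\pi_u\,d[S]_u=\ind A(\int_0^{t\wedge T_n}\alpha_u\,d[S]_u-\int_0^{s\wedge T_n}\alpha_u\,d[S]_u)$, the identity collapses to $\E[\ind A(\hat S_{t\wedge T_n}-\hat S_{s\wedge T_n})]=0$ for all $A\in\FF_s$, i.e.\ $\E[\hat S_{t\wedge T_n}\mid\FF_s]=\hat S_{s\wedge T_n}$. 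Integrability of $\hat S_{t\wedge T_n}$ follows from $|S_{t\wedge T_n}|\le n$ and $\E\int_0^{T_n}|\alpha_u|\,d[S]_u\le\sqrt n\,\norm{\alpha}_{\sH^2}$. Hence each $\hat S^{T_n}$ is a genuine martingale, and as $T_n\uparrow 1$ almost surely (because $\P(T_n=1)\to1$), $\hat S$ is a local martingale, establishing \eqref{decomposition}.

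The conceptual heart is the scaling argument, which linearizes the quadratic utility criterion and hands us a bounded functional; after that the density and Riesz steps are routine. The step I expect to require the most care is the local-martingale verification, namely manufacturing test strategies whose stochastic integral reproduces the increment $\hat S_{t\wedge T_n}-\hat S_{s\wedge T_n}$ while staying inside $\sH^b_n\cap\sH^s$ (hence the measurability check on $\ind{A\cap\{s<T_n\}}$), together with the integrability control needed to localize.
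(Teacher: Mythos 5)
Your proposal is correct and follows essentially the same route as the paper: the scaling argument bounding $\Lambda(\pi)=\E[\int_0^1\pi_u\,dS_u]$ by $\sqrt{2C}\,\norm{\pi}_{\sH^2}$, the density of $\sH^s\cap\sH^b$ plus Riesz representation to produce $\alpha$, and localization along $\{T_n\}$. The only cosmetic difference is in the last step, where the paper tests against $\ind{[0,\tau\wedge T_n]}$ for arbitrary stopping times $\tau$ and invokes the constant-expectation characterization of martingales, while you verify $\E[\hat S_{t\wedge T_n}\mid\FF_s]=\hat S_{s\wedge T_n}$ directly with set-indexed test strategies; both are standard and equivalent.
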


\begin{proof}
For $\pi \in \sH^{s}\cap\sH^{b}$, both integrals
$
\int_{0}^{1}\pi_{u}\,dS_{u}$ and $\int_{0} ^{1}\pi_{u}^{2}\,d\left[S\right]_{u}
$
are  well-defined and have finite expectations; in particular the
linear functional $\Lambda$, defined by
$$
\Lambda(\pi)\triangleq \E\left[\int_0^1\pi_u\, dS_u\right],\  \pi\in\sH^s\cap\sH^b
$$
is well-defined and finite valued on
$\sH^{s}\cap\sH^{b}$.
Assumption \eqref{equ:finite-log-Brown} grants the existence of a finite constant $C$
such that
$$
\E\left[\int_{0}^{1}\pi_{u}\,dS_{u}-\frac{1}{2}\int_{0}%
^{1}\pi_{u}^{2}\,d\left[S\right]_{u}\right] = \E\left[\log(W^{\pi}_1)\right] \le C,\text{
for all $\pi\in\sH^{s}\cap\sH^{b}$ }.
$$
Therefore, $\Lambda$ admits the following bound
\[
\Lambda(\pi)\leq C+\frac{1}{2}{||\pi||}_{{\mathcal{H}}^{2}}^{2},
\]
which can be strengthened by noting that
\begin{equation}
\Lambda(\pi)=\frac{1}{\gamma}\Lambda(\gamma\pi)\leq\frac{C}{\gamma}%
+\frac{\gamma}{2}{||\pi||}_{\mathcal{H}^{2}}^{2}, \text{ for each
  $\gamma>0$.}
\label{equ:gammas}%
\end{equation}
Minimization the right-most part of (\ref{equ:gammas}) with respect to
$\gamma$ yields that
\[
|\Lambda(\pi)|\leq\sqrt{2C}{||\pi||}_{\mathcal{H}^{2}},\text{ for all $\pi\in\sH^{s}\cap\sH^{b}$.}
\]
From this we conclude that $\Lambda$ is a continuous linear
functional on $\sH^{s}\cap\sH^{b}$. Proposition \ref{density}
states that  $\sH^{s}\cap\sH^{b}$ is dense in $\sH^2$ with respect
to the topology induced by the norm
${||\cdot||}_{\mathcal{H}^{2}}$. Consequently,  the linear
functional $\Lambda$ admits a unique linear and continuous
extension to $\sH^2$. Riesz's representation theorem guarantees
the existence of a process $\alpha\in{\mathcal{H}}^{2}$ such that
\begin{equation}
{\mathbb{E}}\left[\int_{0}^{1}\pi_{u}\,dS_{u}\right]=\Lambda(\pi)={\mathbb{E}}\left[\int_{0}%
^{1}\pi_{u}\alpha_{u}\,d\left[S\right]_{u}\right],\text{ for
$\pi\in\sH^{s}\cap\sH^{b}$.}\label{equ:Riesz}%
\end{equation}

The proof is concluded by showing that the sequence
$\{T_n\}_{n\in\N}$ defined by \eqref{stoppingtimes} can be used to
reduce the continuous adapted process
$$
\hat{S}_t \triangleq S_t - \int_0^t\alpha_ud\left[S\right]_u
$$
to a martingale.
To this end, we let $\tau$ be an arbitrary stopping time and define
the simple process $\pi^n$ in $\sH^{s}\cap\sH^{b}$ by
$\pi_u^n\triangleq\mathbf{1}_{\left[0,\tau\land
    T_n\right]}(u)$. Applying the equality \eqref{equ:Riesz} to $\pi^n$ yields
\begin{equation}
{\mathbb{E}}\left[S_{\tau\land T_n}\right] -
S_0={\mathbb{E}}\left[\int_{0}^{\tau\land
T_n}\alpha_{u} \, d\left[S\right]_{u}\right],\text{ that is, }\ {\mathbb{E}}\left[\hat{S}^{T_n}_{\tau}\right]=S_0.\label{equ:expect-is-finite}%
\end{equation}
Since (\ref{equ:expect-is-finite}) holds for all stopping times
$\tau$, it follows that $\hat{S}^{T_n}$ is a martingale. Hence,
$\{\hat{S}_{t}\}_{t\in\left[0,1\right]}$ is a local martingale and
therefore, the process $\{S_{t}\}_{t\in\left[0,1\right]}$ is a
semimartingale with the decomposition $
S_{t}=\hat{S}_{t}+\int_{0}^{t}\alpha_{u}\,d\left[S\right]_{u}$.
$\lozenge$
\end{proof}

\begin{corollary}[The Growth-Optimal Portfolio]
Under the assumptions of Theorem \ref{thm:main-Brownian}, the
wealth process $W^{\pi}$ is well-defined in the It\^o sense for all
$\pi\in\sH^2$ and
the stronger version of \eqref{equ:finite-log-Brown}
\begin{equation}
    \nonumber
    %\label{}
    \begin{split}
        \sup_{\pi\in\sH^2} \E\left[\log(W^{\pi}_1)\right]<\infty
    \end{split}
\end{equation}
holds. Moreover, the supremum is attained by the process
$\alpha\in\sH^2$ from \eqref{decomposition}.
\end{corollary}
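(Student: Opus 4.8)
The plan is to upgrade the main theorem's conclusion from a bound over simple proportions $\sH^s$ to a bound over all of $\sH^2$, and to exhibit $\alpha$ as the maximizer. Now that Theorem~\ref{thm:main-Brownian} has established that $S$ is a genuine semimartingale with decomposition $S = \hat S + \int_0^\cdot \alpha_u\,d[S]_u$, the classical It\^o integral $\int_0^1 \pi_u\,dS_u$ is well-defined for every $\pi\in\sH^2$ (using that $\hat S$ is a local martingale and $\int \pi\,d[S]$ makes sense whenever $\pi\in\sH^2$), so the canonical exponential $\sE(\pi\cdot S)$ and hence $W^\pi$ extend unambiguously to all $\pi\in\sH^2$. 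The first step is therefore to record that for $\pi\in\sH^2$ we have the clean identity
\begin{equation}
\nonumber
\E[\log(W^\pi_1)] = \Lambda(\pi) - \tfrac12\|\pi\|_{\sH^2}^2 = \E\Big[\int_0^1 \pi_u\alpha_u\,d[S]_u\Big] - \tfrac12\E\Big[\int_0^1 \pi_u^2\,d[S]_u\Big],
\end{equation}
where the extended Riesz representation \eqref{equ:Riesz} is used to replace $\Lambda(\pi)$ by the $\alpha$-integral on all of $\sH^2$, not merely on the dense subset.

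With that identity in hand, the optimization is pointwise and elementary: inside the expectation the integrand is $\pi_u\alpha_u - \tfrac12\pi_u^2$ with respect to $d[S]_u$, which is maximized $d[S]\otimes d\PP$-a.e.\ by the choice $\pi_u = \alpha_u$, giving value $\tfrac12\alpha_u^2$. Hence the second step is to complete the square,
\begin{equation}
\nonumber
\E[\log(W^\pi_1)] = \tfrac12\|\alpha\|_{\sH^2}^2 - \tfrac12\E\Big[\int_0^1 (\pi_u-\alpha_u)^2\,d[S]_u\Big] \le \tfrac12\|\alpha\|_{\sH^2}^2,
\end{equation}
so the supremum over $\sH^2$ is bounded by $\tfrac12\|\alpha\|_{\sH^2}^2<\infty$ (finite because $\alpha\in\sH^2$), which establishes the stronger boundedness claim. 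Equality holds precisely when $\pi=\alpha$ in $\sH^2$, so the supremum is attained by $\alpha$, delivering the growth-optimal portfolio.

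The one point requiring genuine care is justifying the extension of the Riesz identity \eqref{equ:Riesz} from $\sH^s\cap\sH^b$ to arbitrary $\pi\in\sH^2$, which is where the semimartingale structure must be invoked: I would argue that both sides of the desired identity, $\E[\int_0^1\pi_u\,dS_u]$ and $\E[\int_0^1\pi_u\alpha_u\,d[S]_u]$, are $\|\cdot\|_{\sH^2}$-continuous in $\pi$ on all of $\sH^2$ (the right side by Cauchy--Schwarz since $\alpha\in\sH^2$, the left side because the local-martingale part contributes a term controlled by $\|\pi\|_{\sH^2}$ after localization along $\{T_n\}$ and the finite-variation part is again Cauchy--Schwarz), and then use the density from Proposition~\ref{density} to pass to the limit. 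The main obstacle is thus the integrability bookkeeping needed to make $\E[\int_0^1\pi_u\,d\hat S_u]=0$ and hence $\E[\int_0^1\pi_u\,dS_u]=\E[\int_0^1\pi_u\alpha_u\,d[S]_u]$ rigorous for all $\pi\in\sH^2$; once the stochastic integral against the local martingale $\hat S$ is shown to be a true (mean-zero) martingale for $\pi\in\sH^2$ via the localizing sequence $\{T_n\}$ together with dominated convergence, the rest of the corollary is a routine completion of the square.
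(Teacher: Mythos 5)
Your proposal is correct and follows essentially the same route as the paper: extend the Riesz identity \eqref{equ:Riesz} to all of $\sH^2$ (via the $L^2$-martingale property of $\int\pi\,d\hat S$ and density of $\sH^s\cap\sH^b$), then maximize the quadratic integrand $\pi_u\alpha_u-\tfrac12\pi_u^2$ pointwise to identify $\alpha$ as the optimizer. The only cosmetic difference is that the paper proves the inequality on the dense subset first and then invokes density, whereas you extend the identity first and complete the square on all of $\sH^2$; your version actually makes the paper's final ``it suffices to use density'' step more explicit.
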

\begin{proof}
Since $\alpha\in\sH^2$, the process $\int_0^t \alpha_u \, d\hat{S}_u$ is a true
martingale and hence
$
\E\left[\log(W^\alpha_1)\right] = \frac12 \E\int_0^1\alpha^2_ud\left[S\right]_u.
$
The equation
\eqref{equ:Riesz} implies that
\begin{equation}%
\begin{split}
\nonumber
\E\left[\log(W^\pi_1)\right] &= {\mathbb{E}}\left[\int_{0}^{1}\pi_{u}\,dS_{u}\right]
-\frac{1}{2}\ {\mathbb{E}%
}\left[\int_{0}^{1}\pi_{u}^{2}\,d\left[S\right]_{u}\right]\\
&  ={\mathbb{E}}\left[\int_{0}^{1}\left(\pi_{u}\alpha_{u}-\frac{1}{2}\pi_{u}%
^{2}\right)\,d\left[S\right]_{u}\right]\leq\frac{1}{2}\ {\mathbb{E}}\left[\int_{0}^{1}\alpha_{u}%
^{2}\,d\left[S\right]_{u}\right] = \E\left[\log(W^\alpha_1)\right],
\end{split}
\label{equ:quad}%
\end{equation}
for any $\pi\in{\mathcal{H}}^{s}\cap\sH^{b}$. It suffices now to
use the density of ${\mathcal{H}}^{s}\cap\sH^{b}$ in
${\mathcal{H}}^{2}$. $\lozenge$
\end{proof}

\begin{remark}
A simple sufficient condition insuring that the local martingale
$\hat{S}$ in the semimartingale decomposition
\eqref{decomposition} is a true martingale
is that $\EE\left[\left[S\right]_1\right]<\infty$. In that case, $\hat{S}$ will be a
square-integrable martingale as well.
\end{remark}

\section{Processes With Jumps}

\subsection{Definition of the Wealth Process}

In this section we investigate whether it is possible to extend
the results of Theorem \ref{thm:main-Brownian} to the case when
the stochastic process $S$ admits jumps. We are facing the same
problem as in the previous section, i.e., the non-existence of the
canonical theory of stochastic integration for
non-semimartingales. However, with the motivation from Subsection
\ref{sub:canonical-definition},
a canonical definition of the stochastic exponential ${\mathcal{E}}%
(\pi\cdot S)$ of a process $\pi\cdot S$ for $\pi \in{\mathcal{H}}^{s}$
can be given:
\begin{align}
{\mathcal{E}}(\pi\cdot S)_{t}
 &\triangleq \nonumber\\
\exp&\left(  (\pi\cdot
S)_{t}-\frac{1}{2}\int_{0}^{t}\pi_{s}^{2}d\left[S\right]_{t}^{c}\right)
\prod_{s\leq t}(1+\pi_{s}\Delta S_{s})\exp(-\pi_{s}\Delta
S_{s})\label{equ:wealth-with-jumps},
\end{align}
provided that condition \eqref{equ:qv} - the existence of a finite
quadratic variation - holds. In the manner of Subsection
\ref{subs:fin-int}, the process
$W^{\pi}_t\triangleq{\mathcal{E}}(\pi\cdot S)_{t}$ can now be
interpreted as the evolution of the wealth of an investor who
invests the proportion $\pi_t$ of her/his total wealth at time $t$
in the risky asset.

\subsection{A Counterexample}

The goal of this subsection is to show that the results of Section
2.~cannot be extended to the class of processes with jumps, not
even in the case when the process $S$ is obtained from a
semimartingale via an enlargement of filtration. More precisely,
we construct two filtrations $\FFF\subseteq \GGG$, and an
$\FFF$-semimartingale $S$ with the following properties:
\begin{equation}
\left.
\begin{split}
&  \text{ (NS)\quad$S$ is \emph{not} a $\mathbb{G}$-semimartingale, but}\\
&  \text{ (FL)\quad$\sup_{\pi\in{\mathcal{H}}^{s}(\mathbb{G})}{\mathbb{E}%
}\left[\log(W_{1}^{\pi})\right]<\infty$}%
\end{split}
\right\}.  \label{equ:properties-in-example}%
\end{equation}

Before giving the details of our construction let us pause and try
to explain the intuition behind the example. The central idea is
that the introduction of jumps into the dynamics of the stock
price can lead to a drastic restriction of the set of portfolios
at the disposal of a logarithmic utility maximizer. Simply, any
portfolio leading to a negative terminal wealth with positive
probability yields an expected utility of negative infinity (as
usual, we set $\log(x)=-\infty$ for $x\leq 0$), and is, therefore,
clearly inferior to the constant portfolio $\pi\equiv0$. Suppose
that the process $S$ jumps in an unpredictable fashion, while its
continuous part fails the semimartingale property ``just barely''.
In that case, we are able to envision the situation in which the
non-semimartingality of $S$ cannot be exploited for unbounded
gains in logarithmic utility due to previously mentioned scarcity
of useful portfolio strategies. In other words, any strategy that
might lead to a large wealth suffers from the risk of finishing
negative with positive probability.

Theorem 7.2 in \cite{DelSch94} ensures the semimartingality of the
price process provided it is locally bounded
and satisfies the no free lunch with vanishing risk for
buy-and-hold strategies. Moreover, Example 7.5 (also in
\cite{DelSch94}) 
illustrates that the condition of local boundedness cannot be
relaxed. The idea of this example is similar to the above;
namely, the set of admissible portfolios may be almost empty.

Our construction of the process $S$ utilizes the following ingredients:

\begin{enumerate}
\item $B$ is a Brownian motion and $\mathbb{F}%
^{B}\triangleq\{{\mathcal{F}}_{t}^{B}\}_{t\in\left[0,1\right]}$ is the
(right-continuous and complete) augmentation of the filtration
generated by $B$.

\item $M$ is the Gaussian martingale given by
$M_{t}\triangleq\int_{0}^{t}\sigma(u)\,dB_{u}$, where
\[
\sigma(t)\triangleq\frac{\left\vert \log(1-t)\right\vert ^{-2/3}}{\sqrt{1-t}%
}{\mathbf{1}}_{\left\{  1>t>\frac{1}{2}\right\}  }.
\]

\item $N^{1}$ and $N^{2}$
are two independent Poisson processes (of course also independent
of the Brownian motion $B$).

\item $N$ is the pure-jump process
defined by $N_{t}\triangleq N_{t}^{1}-N_{t}^{2}$
and $\mathbb{F}^{N}$ is the filtration generated by the process $N$ (or,
equivalently, by $N^{1}$ and $N^{2}$).
\end{enumerate}
Having introduced the necessary ingredients, the process $S$, announced
in \eqref{equ:properties-in-example}, is defined by
\[
S_{t}\triangleq M_{t}+\int_{0}^{t}\frac{1}{1-u}\,dN_{u},\
t\in\left[0,1\right].
\]
$S$ is clearly an $\mathbb{F}$-semimartingale, where $\mathbb{F}$ is the
filtration generated by $B$ and $N$, i.e. $\mathbb{F}\triangleq\mathbb{F}%
^{B}\vee\mathbb{F}^{N}$. Let the enlarged filtration $\mathbb{G}$ be defined
by adding the information about the terminal value $B_{1}$ of the Brownian
motion $B$ to $\mathbb{F}$, i.e. ${\mathcal{G}}_{t}\triangleq{\mathcal{F}}_{t}%
\vee\sigma(B_{1})$, $t\in \left[0,1\right]$. The properties (NS)
and (FL) in (\ref{equ:properties-in-example}), are now established
through the following lemmas.

\begin{lemma}
Property (NS) in (\ref{equ:properties-in-example}) holds true:
$S$ is \emph{not} a $\mathbb{G}$-semimartingale.
\end{lemma}

\begin{proof}
It is enough to show that $\{M_{t}\}_{t\in\left[0,1\right]}$ is not a $\mathbb{G}%
$-semimartingale. This is, however, exactly the content of Theorem
IV.7 in \cite{Pro04} and the example following it. $\lozenge$
\end{proof}

\begin{lemma}
\label{pro:is-constrained} Let $\pi\in{\mathcal{H}}^{s}(\mathbb{G})$ be a
simple integrand and let $W^{\pi}$ be the corresponding wealth process, as
defined in (\ref{equ:wealth-with-jumps}). If ${\mathbb{P}}\left[W_{1}^{\pi}>0\right]=1$
then
\[
\pi_{t}\in\big(-(1-t),1-t\,\big),\
(\lambda\otimes{\mathbb{P})}\text{-a.e.},
\]
where $\lambda$ denotes the Lebesgue measure on $\left[0,1\right]$.
\end{lemma}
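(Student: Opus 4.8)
The plan is to exploit the multiplicative structure of the stochastic exponential in \eqref{equ:wealth-with-jumps}. Since the continuous factor $\exp\big((\pi\cdot S)_t-\tfrac12\int_0^t\pi_s^2\,d[S]^c_s\big)$ is strictly positive, the sign of $W^\pi_1$ is governed entirely by the product $\prod_{s\le1}(1+\pi_s\Delta S_s)$ over the (a.s.\ finitely many) jump times of $N$. Because $N=N^1-N^2$ and $N^1,N^2$ a.s.\ never jump simultaneously, at a jump coming from $N^1$ we have $\Delta S_s=\tfrac1{1-s}$ and the factor equals $1+\tfrac{\pi_s}{1-s}$, whereas at a jump coming from $N^2$ we have $\Delta S_s=-\tfrac1{1-s}$ and the factor equals $1-\tfrac{\pi_s}{1-s}$. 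Thus a factor is nonpositive precisely when $\pi_s\le-(1-s)$ at an $N^1$-jump, or $\pi_s\ge 1-s$ at an $N^2$-jump, and the asserted interval is exactly the set on which all such factors are forced to be positive.

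First I would record two structural facts. Since $\pi\in\sH^s(\mathbb{G})$ is left-continuous and $\mathbb{G}$-adapted, it is $\mathbb{G}$-predictable, so the sets $A^+\triangleq\{(s,\omega):\pi_s\ge 1-s\}$ and $A^-\triangleq\{(s,\omega):\pi_s\le-(1-s)\}$ are predictable. Second, because $(N^1,N^2)$ is independent of $B$ and hence of $B_1$, both $N^1$ and $N^2$ remain Poisson processes under the enlarged filtration: for $s>t$ the increment $N^i_s-N^i_t$ is independent of $\mathcal{G}_t\subseteq\sigma(B)\vee\mathcal{F}^N_t$, so the $\mathbb{G}$-compensator of $N^i$ is still deterministic. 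This is precisely what keeps the jumps of $N$ unpredictable for the $\mathbb{G}$-investor and is the crux of the example.

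Suppose now, towards a contradiction, that $\pi_t\notin\big(-(1-t),1-t\big)$ on a set of positive $\lambda\otimes\PP$-measure; say $(\lambda\otimes\PP)(A^+)>0$ (the case $A^-$ is symmetric, with $N^1$ in place of $N^2$). Applying the compensation formula to the predictable indicator $\mathbf 1_{A^+}$ against the jumps of $N^2$ gives
\[
\E\Big[\sum_{0<s\le1}\mathbf 1_{A^+}(s,\cdot)\,\Delta N^2_s\Big]
= c\,\E\Big[\int_0^1\mathbf 1_{A^+}(s,\cdot)\,ds\Big]
= c\,(\lambda\otimes\PP)(A^+)>0,
\]
where $c>0$ is the rate of $N^2$. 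Hence with positive probability $N^2$ has at least one jump inside $A^+$, so the first time $\tau$ at which some factor $1+\pi_\tau\Delta S_\tau$ is nonpositive satisfies $\PP[\tau\le1]>0$. On $\{\tau\le1\}$ every earlier factor is positive, so $W^\pi_{\tau-}>0$ and $W^\pi_\tau=W^\pi_{\tau-}(1+\pi_\tau\Delta S_\tau)\le0$.

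It remains to propagate this nonpositivity to time $1$, and this is the step I expect to need the most care, since a later negative factor could in principle restore positivity. The clean way around it is to note that, conditionally on $\mathcal{G}_\tau$ on $\{\tau\le1\}$, the preserved Poisson structure gives $N$ no jump at all in $(\tau,1]$ with probability $e^{-c'(1-\tau)}>0$, where $c'$ is the total jump rate. On that further event no jump factors occur after $\tau$, the sign of $W^\pi$ is frozen, and $W^\pi_1\le0$. Combining the two positive-probability events yields $\PP[W^\pi_1\le0]>0$, contradicting $\PP[W^\pi_1>0]=1$. This forces $(\lambda\otimes\PP)(A^+)=0$ and, symmetrically, $(\lambda\otimes\PP)(A^-)=0$, which is the claim.
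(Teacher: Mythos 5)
Your proof is correct, and it reaches the same contradiction as the paper's (a nonpositive jump factor occurs with positive probability and its effect survives to time $1$), but it gets there by a noticeably more elementary route. The paper first symmetrizes: it sets $\hat\pi_t=\pi_t/(1-t)$, works with the single bad set $A=\{|\hat\pi|\ge 1\}$, and invokes Lemma \ref{lem:difference-of-Poissons} to rewrite $\int_0^\cdot\sgn(\hat\pi_s)\,dN_s$ as a difference of two \emph{new} independent $\mathbb{G}$-Poisson processes, so that all potentially nonpositive factors become jumps of one Poisson process $\tilde N^-$ on $A$; it then shows $\tilde N^-$ jumps \emph{exactly once} on $A$ with positive probability. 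You instead split the bad set into $A^+=\{\pi_s\ge 1-s\}$ and $A^-=\{\pi_s\le-(1-s)\}$ and observe that each is paired with the original Poisson process ($N^2$, resp.\ $N^1$) whose jumps there produce a nonpositive factor; the compensation-formula step is then identical to the paper's martingale argument, and you handle the persistence issue by conditioning on no jumps at all after the first bad jump time $\tau$, rather than engineering ``exactly one'' bad jump. This buys you two things: you dispense with Lemma \ref{lem:difference-of-Poissons} entirely, and your persistence step is arguably cleaner than the paper's (whose ``exactly once'' argument is phrased in terms of the first jump time of $\tilde N^-$ rather than its first jump time \emph{in} $A$, a small imprecision your version avoids). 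The price is that you must justify directly that $N^1$ and $N^2$ remain $\mathbb{G}$-Poisson after the initial enlargement by $\sigma(B_1)$, which you do correctly via independence of $N$ and $B$; the paper needs the same fact as a hypothesis of its Lemma \ref{lem:difference-of-Poissons}, so nothing is lost.
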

Before proving Lemma \ref{pro:is-constrained}, we require the following result:
\begin{lemma}
\label{lem:difference-of-Poissons} Let $N$ be a
difference of two independent $\mathbb{G}$-Poisson processes, and let
$\beta$ be a $\mathbb{G}$-predictable process taking
values in the set $\left\{  -1,1\right\}  $. Then the process $N$,
defined by the integral $\tilde{N}_{t}\triangleq\int_{0}%
^{t}\beta_{s}\,dN_{s}$ can be decomposed into a difference of two independent
$\mathbb{G}$-Poisson processes.
\end{lemma}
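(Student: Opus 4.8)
The plan is to exploit the characteristic behavior of Poisson random measures under measurable transformations of the jump marks, reducing the claim to the classical fact that thinning a Poisson process by an independent (or predictably determined) rule produces independent Poisson processes. Write $N = N^1 - N^2$ where $N^1, N^2$ are independent $\mathbb{G}$-Poisson processes with (say) rates $\lambda_1, \lambda_2$. The jumps of $N$ are of two types: a $+1$ jump at each jump time of $N^1$, and a $-1$ jump at each jump time of $N^2$. The process $\tilde{N}_t = \int_0^t \beta_s\, dN_s$ simply re-labels each such jump by multiplying its size by $\beta_s \in \{-1,1\}$. Thus a jump of $\tilde{N}$ is $+1$ precisely when either ($N^1$ jumps and $\beta = +1$) or ($N^2$ jumps and $\beta = -1$), and it is $-1$ in the two complementary cases.

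First I would set up the two candidate processes explicitly:
\[
\tilde{N}^1_t \triangleq \int_0^t \mathbf{1}_{\{\beta_s = 1\}}\,dN^1_s + \int_0^t \mathbf{1}_{\{\beta_s = -1\}}\,dN^2_s,
\]
\[
\tilde{N}^2_t \triangleq \int_0^t \mathbf{1}_{\{\beta_s = -1\}}\,dN^1_s + \int_0^t \mathbf{1}_{\{\beta_s = 1\}}\,dN^2_s,
\]
so that $\tilde{N} = \tilde{N}^1 - \tilde{N}^2$ by construction, since $\beta_s\,\Delta N_s$ contributes $+1$ exactly in the events collected by $\tilde{N}^1$ and $-1$ in those collected by $\tilde{N}^2$. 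The task is then to show that $\tilde{N}^1$ and $\tilde{N}^2$ are each $\mathbb{G}$-Poisson and mutually independent. The natural tool is the characterization via compensators and the exponential-martingale (Watanabe-type) criterion: a counting process with deterministic intensity equal to its jump rate, and with no common jumps, is Poisson, and a pair of such counting processes with no simultaneous jumps and the appropriate joint compensator structure is independent.

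The key step I would carry out is to compute the $\mathbb{G}$-predictable compensators. Since $\beta$ is $\mathbb{G}$-predictable and $N^1, N^2$ are $\mathbb{G}$-Poisson, the compensator of $\tilde{N}^1$ is
\[
\int_0^t \big(\mathbf{1}_{\{\beta_s = 1\}}\lambda_1 + \mathbf{1}_{\{\beta_s = -1\}}\lambda_2\big)\,ds,
\]
and symmetrically for $\tilde{N}^2$. The essential simplification occurs in the symmetric case $\lambda_1 = \lambda_2 = \lambda$ (which is the relevant one for the example, since $N^1, N^2$ are standard Poisson processes): then both compensators collapse to the \emph{deterministic} quantity $\lambda t$, independent of the realization of $\beta$. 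Because $N^1$ and $N^2$ never jump simultaneously, neither do $\tilde{N}^1$ and $\tilde{N}^2$. A counting process whose $\mathbb{G}$-compensator is the deterministic function $t \mapsto \lambda t$ is a $\mathbb{G}$-Poisson process, and two such processes with no common jumps are independent; I would verify this via the joint exponential martingale $\exp\big(i u_1 \tilde{N}^1_t + i u_2 \tilde{N}^2_t - \lambda t(e^{iu_1} + e^{iu_2} - 2)\big)$ being a $\mathbb{G}$-martingale, whose factorized form yields both the Poisson marginals and the independence.

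The main obstacle is the predictability/measurability bookkeeping: one must be careful that $\beta$ being $\mathbb{G}$-predictable is exactly what makes the stochastic integrals against $dN^1, dN^2$ well-defined counting processes with the stated compensators, and that the deterministic intensity is genuinely unaffected by $\beta$ (this is why the equal-rate hypothesis is doing real work — with unequal rates the compensators become $\beta$-dependent and the conclusion can fail). I expect the cleanest route is the martingale-characterization argument rather than a pathwise thinning computation, since it handles the joint independence and the Poisson property simultaneously and localizes the use of the predictability of $\beta$ to the single statement that the compensator of $\int_0^\cdot \mathbf{1}_{\{\beta_s = 1\}}\,dN^i_s$ is $\int_0^\cdot \mathbf{1}_{\{\beta_s = 1\}}\lambda\,ds$.
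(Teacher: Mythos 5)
Your proposal is correct and takes essentially the same route as the paper: the identical splitting of $\tilde{N}$ into two counting processes (your indicators $\mathbf{1}_{\{\beta_s=1\}}$ and $\mathbf{1}_{\{\beta_s=-1\}}$ are exactly the paper's $\beta^+_s$ and $\beta^-_s$), followed by the compensator/Watanabe characterization of the Poisson property and independence via the absence of common jumps. Your observation that equal rates are what make the compensators deterministic is a fair point that the paper leaves implicit (its martingale claim $\tilde{N}^{\pm}_t - t$ presupposes unit rates for both components).
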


\begin{proof}
Let $N_{t}\triangleq N_{t}^{+}-N_{t}^{-}$ be the decomposition of
$N$ into two independent Poisson processes, and let
$\beta_{t}^{+}\triangleq \max(\beta_{t},0)$ and
$\beta_{t}^{-}\triangleq \max(-\beta_{t},0)$ so that
$\beta_{t}=\beta_{t}^{+}-\beta _{t}^{-}$ and
$\beta_{t}^{+}+\beta_{t}^{-}=1$, for all $t\in\left[0,1\right]$, a.s. The
processes $\tilde{N}^{+}$ and $\tilde{N}^{-}$ defined by
\begin{equation}%
\begin{split}
\tilde{N}_{t}^{+}  &  \triangleq\int_{0}^{t}\beta_{s}^{+}\,dN_{s}^{+}+\int_{0}^{t}%
\beta_{s}^{-}\,dN_{s}^{-},\ \text{and}\\
\tilde{N}_{t}^{-}  &  \triangleq\int_{0}^{t}\beta_{s}^{-}\,dN_{s}^{+}+\int_{0}^{t}%
\beta_{s}^{+}\,dN_{s}^{-}.
\end{split}
\nonumber
\end{equation}
have the following properties

\begin{enumerate}
\item $\tilde{N}^{+}$ and $\tilde{N}^{-}$ are non-decreasing processes and
increase only by jumps of magnitude $1$.

\item $\tilde{N}^{+}_{t}-(\beta^{+}_{t}+\beta^{-}_{t})t=\tilde{N}^{+}_{t}-t$
and $\tilde{N}^{-}_{t}-(\beta^{+}_{t}+\beta^{-}_{t})t=\tilde{N}^{-}_{t}-t$ are martingales.

\item The intersection of the sets of jump-times for $\tilde{N}^{+}$ and
$\tilde{N}^{-}$ is empty, a.s.
\end{enumerate}

Items (1) and (2) imply that $\tilde{N}^{+}$ and $\tilde{N}^{-}$
are $\mathbb{G}$-Poisson processes and (3) is enough to conclude
that they are independent (see \cite{Bre81}). Therefore,
$\tilde{N}=\tilde{N}^{+}-\tilde {N}^{-}$ is a difference of two
Poisson processes. $\lozenge$
\end{proof}

\begin{proof}
(Of Lemma \ref{pro:is-constrained}) Let the process $\hat{\pi}$ be
defined as $\hat{\pi}_{t}\triangleq\pi_{t}/(1-t)1_{\{t<1\}}$, and
suppose that the predictable set
$A\triangleq\left\{  (t,\omega)\in\left[0,1\right]\times\Omega\,:\,\left\vert \hat{\pi}%
_{t}(\omega)\right\vert \geq1\right\}  $ satisfies $(\lambda\otimes
{\mathbb{P}})\left[A\right]>0$. The expression (\ref{equ:wealth-with-jumps})
for the wealth $W_{1}^{\pi}$ can be split into two factors, one of
which is an exponential and the other is the product of the
form
\[
Y\triangleq\prod_{s\leq1}(1+\pi_{s}\frac{1}{1-s}\Delta N_{s})=\prod_{s\leq
1}(1+\hat{\pi}_{s}\Delta N_{s}).
\]
The sign of $W_{1}^{\pi}$ is equal to the sign of $Y$, so in order to reach a
contradiction, it will be enough to prove that ${\mathbb{P}}\left[Y\leq0\right]>0$.

Define the process $\tilde{N}_{t}\triangleq\int_{0}^{t}\operatorname{sgn}%
(\hat{\pi}_{s})\,dN_{s}$, where $\operatorname{sgn}(x)=1$ for $x\geq0$ and
$\operatorname{sgn}(x)=-1$, otherwise. By Lemma
\ref{lem:difference-of-Poissons}, there exist two independent Poisson
processes $\tilde{N}^{+}$ and $\tilde{N}^{-}$ such that $\tilde{N}=\tilde
{N}^{+}-\tilde{N}^{-}$, and
\[
Y=\prod_{s\leq1}(1+\left\vert \hat{\pi}_{s}\right\vert \Delta\tilde{N}_{s}).
\]
Let $J$ be the event that $\tilde{N}^{-}$ jumps exactly once on the
set $A$, i.e.,
$$J\triangleq\left\{
\int_{0}^{1}{\mathbf{1}}_{{A}}(s)\,d\tilde {N}^{-}_s=1\right\}.$$
Since $\left\vert \hat{\pi}_{s}\right\vert \geq1$ on $A$, it is easy
to see that
${\mathbb{P}}\left[Y\leq0\right]\geq{\mathbb{P}}\left[J\right]$. In
order to show that ${\mathbb{P}}\left[J\right]>0$, we first define $J^{\prime}%
\triangleq\left\{  \int_{0}^{1}{\mathbf{1}}_{{A}}(s)\,d\tilde{N}^{-}_s%
\geq1\right\}  \supseteq J$. The martingale property of the process
$X_{t}\triangleq\int_{0}^{t}{\mathbf{1}}_{{A}}(s)\,d\tilde{N}_{s}^{-}-\int_{0}%
^{t}{\mathbf{1}}_{{A}}(s)\,ds$ implies that
\[
{\mathbb{E}}\left[\int_{0}^{1}{\mathbf{1}}_{{A}}(s)\,d\tilde{N}_{s}^{-}%
\right]={\mathbb{E}}\left[\int_{0}^{1}{\mathbf{1}}_{{A}}(s)\,ds\right]=(\lambda\otimes
{\mathbb{P}})\left[A\right]>0,
\]
showing that the ${\mathbb{N}}\cup\left\{  0\right\}  $-valued random variable
$\int_{0}^{t}{\mathbf{1}}_{{A}}(s)\,d\tilde{N}_{s}^{-}$ has a strictly
positive expectation, and thus ${\mathbb{P}}\left[J^{\prime}\right]>0$. Define $\tau^{1}$
to be the first jump time of the process $\tilde{N}^{-}$. By the $\mathbb{G}%
$-L\'{e}vy property of the Poisson process $\tilde{N}^-$, the
process $\hat {N}_{t}\triangleq
\tilde{N}_{\tau_{1}+t}^{-}-\tilde{N}_{\tau_{1}}^{-}$ is a Poisson
process, independent of ${\mathcal{G}}_{\tau_{1}}$. The
probability that $\hat{N}$ will stay constant for one unit of time
is strictly positive, and, consequently, so is the probability
that $\tilde{N}^{-}$ will jump exactly once on $A$. This implies
that $\mathbb{P}\left[Y\mathbb{\leq}0\right]>0$ - a contradiction.
$\lozenge$
\end{proof}

\begin{proposition}
There exists a constant $C<\infty$ such that
\begin{equation}
{\mathbb{E}}\left[\log(W_{1}^{\pi})\right]\leq C,\ \text{for all $\pi\in{\mathcal{H}%
}^{s}(\mathbb{G})$}. \label{equ:bounded-log}%
\end{equation}

\end{proposition}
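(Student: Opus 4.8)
The plan is to discard the ``bad'' portfolios for free and then to exploit the constraint supplied by Lemma \ref{pro:is-constrained} to tame the non-semimartingale drift. First I would split $\sH^s(\mathbb{G})$ according to the sign of the terminal wealth. If $\PP[W_1^\pi>0]<1$, then $W_1^\pi\le 0$ on a set of positive probability, so $\log(W_1^\pi)=-\infty$ there and its negative part is non-integrable; hence, by the convention adopted after \eqref{equ:finite-log-Brown}, $\EE[\log(W_1^\pi)]=-\infty\le C$ trivially. Thus it suffices to bound $\EE[\log(W_1^\pi)]$ over those $\pi$ with $\PP[W_1^\pi>0]=1$, and for each such $\pi$ Lemma \ref{pro:is-constrained} furnishes the pointwise bound $|\hat\pi_s|<1$ $(\lambda\otimes\PP)$-a.e., where $\hat\pi_s\triangleq \pi_s/(1-s)$.

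Next I would take the logarithm of \eqref{equ:wealth-with-jumps} and separate the continuous and the jump contributions. Writing $d[S]^c_s=\sigma^2(s)\,ds$ and $\Delta S_s=\tfrac{1}{1-s}\Delta N_s$, one obtains
$$\log(W_1^\pi)=\Big(\int_0^1\pi_s\sigma(s)\,dB_s-\tfrac12\int_0^1\pi_s^2\sigma^2(s)\,ds\Big)+\sum_{s\le 1}\log\!\big(1+\hat\pi_s\Delta N_s\big).$$
The jump term is easy: since $|\hat\pi_s|<1$, every factor $1+\hat\pi_s\Delta N_s$ is strictly positive, and $\log(1+x)\le x$ gives $\sum_{s\le 1}\log(1+\hat\pi_s\Delta N_s)\le\int_0^1\hat\pi_s\,dN_s$. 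Because $N$ is a difference of two independent $\mathbb{G}$-Poisson processes (as already used in Lemma \ref{lem:difference-of-Poissons}, the enlargement only adjoins the $B$-measurable variable $B_1$, which is independent of $N$), $N$ is a $\mathbb{G}$-martingale, and the bounded $\mathbb{G}$-predictable integrand $\hat\pi$ yields $\EE\big[\int_0^1\hat\pi_s\,dN_s\big]=0$.

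The heart of the matter is the continuous part, where I would invoke the initial-enlargement decomposition of $B$ under $\mathbb{G}$, namely $B_t=\tilde B_t+\int_0^t\tfrac{B_1-B_u}{1-u}\,du$ with $\tilde B$ a $\mathbb{G}$-Brownian motion. This produces the martingale term $\int_0^1\pi_s\sigma(s)\,d\tilde B_s$, the information drift $\int_0^1\pi_s\sigma(s)\tfrac{B_1-B_s}{1-s}\,ds$, and the non-positive term $-\tfrac12\int_0^1\pi_s^2\sigma^2(s)\,ds$. The crucial point is that the constraint $|\pi_s|<1-s$, together with the explicit form of $\sigma$, renders both time-integrals finite even though $M$ fails to be a $\mathbb{G}$-semimartingale. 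Indeed $\pi_s^2\sigma^2(s)<(1-s)|\log(1-s)|^{-4/3}\mathbf 1_{\{s>1/2\}}$, whose integral over $[0,1]$ is a finite deterministic constant; hence $\pi\sigma\in L^2(\lambda\otimes\PP)$ with uniformly bounded norm, so $\int_0^1\pi_s\sigma(s)\,d\tilde B_s$ is a true martingale of zero expectation. For the drift, $|\pi_s\sigma(s)|\tfrac{|B_1-B_s|}{1-s}<\tfrac{|\log(1-s)|^{-2/3}}{\sqrt{1-s}}|B_1-B_s|\mathbf 1_{\{s>1/2\}}$, and since $\EE|B_1-B_s|=\sqrt{2/\pi}\,\sqrt{1-s}$, Tonelli's theorem bounds its expectation by $\sqrt{2/\pi}\int_{1/2}^1|\log(1-s)|^{-2/3}\,ds=:C<\infty$.

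Combining the three pieces and discarding the non-positive quadratic term then gives $\EE[\log(W_1^\pi)]\le C$ for every admissible $\pi$, with the same $C$ covering the degenerate case of the first paragraph. I expect the main obstacle to be precisely the drift estimate: the naive completion of squares produces $\tfrac12\EE\int_0^1\big(\tfrac{B_1-B_s}{1-s}\big)^2\,ds=\infty$, so the argument must genuinely use the portfolio constraint induced by the \emph{jumps} (not the quadratic term alone) to see that the surviving $L^1$-drift is finite. A secondary, more technical point is to justify the interchange of expectation and integration and the martingale property of $\int\pi\sigma\,d\tilde B$, both of which follow from the uniform deterministic $L^2$-bound established above.
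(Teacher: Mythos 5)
Your proof is correct, and for the key estimate it takes a genuinely different route from the paper. The reduction via Lemma \ref{pro:is-constrained}, the treatment of the degenerate portfolios through the $-\infty$ convention, and the jump term (via $\log(1+x)\le x$ and the $\mathbb{G}$-martingale property of $N$, which survives the enlargement because $B_1$ is independent of $N$) all coincide with the paper's argument. The difference is in the continuous part: the paper bounds $C(\pi)=\E[\log\sE(\pi\cdot M)_1]$ by first applying Jensen's inequality to pass to $\log\E[\sE(\pi\cdot M)_1]$, then Fatou's lemma to work on $[0,t]$ with $t<1$, and finally Cauchy--Schwarz against the exponential supermartingale $\sE(2\pi\cdot\hat M)$, which forces it to control \emph{exponential} moments of $\sup_t|B_t|$. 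You instead keep the logarithm, decompose $\int_0^1\pi_u\sigma(u)\,dB_u$ linearly via the initial-enlargement drift $\frac{B_1-B_u}{1-u}$, kill the $\tilde B$-integral by the uniform deterministic $L^2$-bound $\pi^2\sigma^2\le(1-\cdot)|\log(1-\cdot)|^{-4/3}$, discard the nonpositive quadratic term, and bound the information drift in $L^1$ using only $\E|B_1-B_u|=\sqrt{2/\pi}\sqrt{1-u}$. This is more elementary (first and second moments instead of exponential moments), yields an explicit constant, and makes transparent that the jump-induced constraint $|\pi_u|<1-u$ is exactly what converts the a.s.\ non-absolutely-convergent drift $\int_0^1\sigma(u)\frac{|B_1-B_u|}{1-u}\,du=\infty$ into an integrable one. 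The one point you should make explicit is the legitimacy of the decomposition of $(\pi\cdot S)_1$ at the terminal time: since $M$ fails to be a $\mathbb{G}$-semimartingale precisely because its drift is not absolutely convergent at $t=1$, the identity $\int_0^1\pi\sigma\,dB=\int_0^1\pi\sigma\,d\tilde B+\int_0^1\pi\sigma\frac{B_1-B_\cdot}{1-\cdot}\,du$ should be obtained by passing to the limit $t\uparrow1$ in the valid decomposition on $[0,t]$; your own $L^1$ and $L^2$ estimates (or, alternatively, the observation that a simple $\pi$ subject to $|\pi_u|<1-u$ necessarily vanishes on a left neighborhood of $1$ pathwise) supply exactly the convergence needed, so this is a presentational gap rather than a mathematical one.
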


\begin{proof}
By Lemma \ref{pro:is-constrained}, it is enough to show that
(\ref{equ:bounded-log}) is true for all $\pi\in{\mathcal{H}}^{s}%
(\mathbb{G})$, with the additional property that $\left\vert \pi
_{s}\right\vert <(1-s)$, $\lambda\otimes{\mathbb{P}}$-a.e.

The expression for $W_{1}^{\pi}$ given in (\ref{equ:wealth-with-jumps})
factorizes into an exponential and a product of transformed jumps, so that
${\mathbb{E}}\left[\log(W_{1}^{\pi})\right]\leq C(\pi)+J(\pi)$, where%
\begin{equation}%
\begin{split}
C(\pi) &  \triangleq{\mathbb{E}}\left[\log({\mathcal{E}}(\pi\cdot M))_{1}%
\right]\quad\text{ and}\\
D(\pi) &  \triangleq{\mathbb{E}}
\left[\sum_{s\leq1}\log(1+\frac{\pi_{s}}{1-s}\Delta
N_{s})\right]\leq\mathbb{E}\left[\sum_{s\leq1}\frac{\pi_{s}}{1-s}\Delta N_{s}\right]=0.
\end{split}
\nonumber
\end{equation}
To obtain a bound on $C(\pi)$ we first apply Jensen's inequality and then
Fatou's Lemma to obtain
\[
C(\pi)\leq\log({\mathbb{E}}\left[{\mathcal{E}}(\pi\cdot M)_{1}\right])
\leq\liminf_{t\rightarrow1}\log({\mathbb{E}}
\left[{\mathcal{E}}(\pi\cdot M)_{t}\right]).
\]
Now, all we need is a uniform bound (in $\pi$ and $t$) on ${\mathbb{E}%
}\left[{\mathcal{E}}(\pi\cdot M)_{t}\right]$, for $t<1$. This is accomplished by noting
that the process $M$ is a $\mathbb{G}$-semimartingale on any interval
$\left[0,u\right],u<1$, with the semimartingale decomposition $M=\hat{M}+(M-\hat{M})$,
where the $\mathbb{G}$-martingale $\hat{M}$ is given by :
\[
\hat{M}_{t}\triangleq\int_{0}^{t}\sigma(u)\left(  dB_{u}-\frac{B_{1}-B_{u}%
}{1-u}du\right)  .
\]
This allows us to write
\begin{align*}
{\mathcal{E}}(\pi\cdot M)_{t} &  ={\mathcal{E}}\left(  \int_{0}^{t}\pi
_{u}\,d\hat{M}_{u}+\int_{0}^{t}\pi_{u}\sigma(u)\left(  \frac{B_{1}-B_{u}}%
{1-u}\right)  \,du\right)  \\
&  =\exp\left(  (\pi\cdot\hat{M})_{t}-(\pi^{2}\cdot\left[\hat{M}%
\right])_{t}\right)  \\
&  \times\exp\left(  \frac{1}{2}\int_{0}^{t}\pi_{u}^{2}\sigma(u)^{2}%
du+\int_{0}^{t}\pi_{u}\sigma(u)\left(  \frac{B_{1}-B_{u}}{1-u}\right)
du\right)  .
\end{align*}
The Cauchy-Schwartz inequality, combined with the observation that the square of the
exponential $\exp\left(  (\pi\cdot\hat{M})_{t}-(\pi^{2}\cdot\left[\hat
{M}\right])_{t}\right)  $ is a positive local martingale and, hence, a
supermartingale, yields:
\[
{\mathbb{E}}\left[{\mathcal{E}}(\pi\cdot
  M)_{t}\right]^{2}\leq{\mathbb{E}}
\left[
\exp\left(  \int_{0}^{t}\pi_{u}^{2}\sigma(u)^{2}du+2\int_{0}^{t}\pi_{u}%
\sigma(u)\left(  \frac{B_{1}-B_{u}}{1-u}\right)  du\right)  \right]  .
\]
To see that this expectation can be bounded away from $\infty$,
independently of $t$ and $\pi$, we can use the bound
$|\pi_{t}|\leq1-t$, the explicit form of the function $\sigma$,
and the fact that all exponential moments of the random variable
$\sup_{t\in\left[0,1\right]}|B_{t}|$ are finite. $\lozenge$
\end{proof}

\begin{acknowledgements}
We thank Dmitry Kramkov and Morten Mosegaard Christensen as well
as anonymous referees for useful comments.
\end{acknowledgements}

\end{document}